\newtheorem{theorem}{Theorem}[section]
\newtheorem{proposition}[theorem]{Proposition}
\newtheorem{lemma}[theorem]{Lemma}
\newtheorem{definition}[theorem]{Definition}
\theoremstyle{remark}
\title{Observational Causality Testing}
\author{Brian Knaeble}
\address{Department of Computer Science, Utah Valley University, Orem, UT}
\email{bknaeble@uvu.edu}
\author{Braxton Osting}
\address{Department of Mathematics, University of Utah, Salt Lake City, UT}
\email{osting@math.utah.edu}
\author{Placede Tshiaba}
\address{Department of Mathematics, University of Utah, Salt Lake City, UT}
\email{placede.tshiaba@utah.edu}
\keywords{Propensity, Prognosis, Randomness, Concordance, Sensitivity Analysis}
\begin{document}
\maketitle

\begin{abstract}
In prior work we have introduced an asymptotic threshold of sufficient randomness for causal inference from observational data. In this paper we extend that prior work in three main ways. First, we show how to empirically estimate a lower bound for the randomness from measures of concordance transported from studies of monozygotic twins. Second, we generalize our methodology for application on a finite population and we introduce methods to implement finite population corrections. Third, we generalize our methodology in another direction by incorporating measured covariate data into the analysis. The first extension represents a proof of concept that observational causality testing is possible. The second and third extensions help to make observational causality testing more practical. As a theoretical and indirect consequence of the third extension we formulate and introduce a novel criterion for covariate selection. We demonstrate our proposed methodology for observational causality testing with numerous example applications.
\end{abstract}

\section{Introduction}
\label{intro}
A sensitivity analysis can support causal inference from observational data. Methods for sensitivity analysis introduce sensitivity parameters that can be computed from observed data. A classic approach involving propensity probabilities is described in \citet[Chapter 4]{Rose}. More recently, \citet{Oster} has utilized coefficients of determination for the purpose of sensitivity analysis. 
In related, prior work we have introduced an asymptotic threshold of sufficient randomness for causal inference \citep{Knaeble2023}. Here we expand ideas further, beyond sensitivity analysis, resulting in a methodology that we refer to as observational causality testing. This paper represents a proof of concept that observational causality testing is possible. 
\subsection{Methods}
Here we assume that the natural process giving rise to an exposure is individualistic and probabilistic \cite[p. 31]{Imbens2015}, and we assume also the Stable Unit Treatment Value Assumption (SUTVA) \cite[p. 10]{Imbens2015}. To make SUTVA more plausible we operate within a framework of stochastic counterfactuals \citep{Robins1989,VR,Kent2019}, wherein SUTVA is relaxed to mean that the parameter values of the stochastic, potential outcomes are stable. Each individual of the population has not just a propensity probability of treatment but also two prognosis probabilities of the outcome in the presence and absence of treatment, c.f. \citet{Hansen2008,Leacy2013,Aikens2022}. The process giving rise to the outcome is also assumed to be individualistic and probabilistic.  

Our methodology is general enough to be applicable whenever an event $E$ is thought to cause another event $D$. The event $E$ could be a treatment or an exposure, and the event $D$ could be a response, an outcome, or the occurrence of a disease. We indicate occurrence of $E$ with $e=1$ and the absence of $E$ with $e=0$. Likewise, we indicate the occurrence of $D$ with $d=1$ and the absence of $D$ with $d=0$. We denote individual propensity probability of $E$ with $\pi$, we denote individual prognosis probability of $D$ in the absence of $E$ with $r_{0}$, and we denote individual prognosis probability of $D$ in the presence of $E$ with $r_{1}$. For each individual indexed by $i$ we have $e_i\sim \textrm{Bernoulli}(\pi_i)$, and the stochastic, potential outcomes are $d_i(e_i=0)\sim \textrm{Bernoulli}(r_{0i})$ and $d_i(e_i=1)\sim \textrm{Bernoulli}(r_{1i})$. For each individual $i$ our generalized SUTVA means $(r_{0i},r_{1i})$ does not depend on $e_i$, and we assume also that $(d_i(e_i=0),d_i(e_i=1))$ is independent of $e_i$. Unless otherwise specified we assume an infinite population of individuals. We observe the population distribution of $(e,d)$ values, but we do not know the distribution of latent $(\pi,r_{0},r_{1})$ values. Our null hypothesis of no causality is
\begin{equation}
\label{null}
H_0\colon  r_{0}=r_{1} ~\textrm{almost surely}.
\end{equation}

%
We denote individual, expected, prognosis probability with $r:=\pi r_{1}+(1-\pi)r_{0}$. 
We denote population means with $\bar{\pi}$ and $\bar{r}$, and we denote population variances with $\sigma^2_\pi$ and $\sigma^2_r$. The randomness is defined as \begin{equation}\label{etadef}\eta=1-R_{\pi}R_r,\end{equation} where $R^2_\pi=\sigma^2_\pi/(\bar{\pi}(1-\bar{\pi}))$ and $R^2_r=\sigma^2_r/(\bar{r}(1-\bar{r}))$. Empirical evidence for randomness in data generating processes can be found within studies of monozygotic twins. For an event $\zeta\in\{E,D\}$ we write $BC_\zeta$ for the probandwise concordance, $PC_\zeta$ for the pairwise concordance, and $P$ for the proportion of individuals for whom the event $\zeta$ occurred. The following inequalities are derived from the results of Appendix \ref{concordances}: 
\begin{equation}\label{lb1}l_\eta(BC):=1-\sqrt{1-\frac{1-BC_E}{1-P(e=1)}}\sqrt{1-\frac{1-BC_D}{1-P(d=1)}}\leq \eta,\end{equation}
\begin{equation}\label{lb2}l_\eta(PC):=1-\sqrt{1-\frac{1-PC_E}{(1+PC_E)(1-P(e=1))}}\sqrt{1-\frac{1-PC_D}{(1+PC_D)(1-P(d=1))}}\leq \eta.\end{equation}
The inequalities tighten when measured concordances come from studies of monozygotic twins reared apart.

We consider theoretical, non-causal explanations of the observed data. Let $\mathcal P(S)$ be the space of distributions $\mu$ on the unit square $S:=\{(\pi,r):0<\pi,r<1\}$. A feasible, non-causal, explanation of the observed data is a distribution $\mu\in\mathcal P(S)$ satisfying
\begin{subequations}
\label{margconst}
\begin{align}
&\int (1-\pi)r_0 d\mu=P(e=0,d=1), \\
&\int \pi r_1 d\mu=P(e=1,d=1), \\ 
&\int (1-\pi)(1-r_0)d\mu=P(e=0,d=0), \textrm{and} \\
&\int \pi(1-r_1)d\mu=P(e=1,d=0).
\end{align}\end{subequations}
The threshold of sufficient randomness for causal inference is defined as 
\begin{equation}\label{Tdef}T:=\max_{\mathcal P(S)} \eta.\end{equation}

Write $p_{01}=P(e=0,d=1)$, $p_{11}=P(e=1,d=1)$, $p_{00}=P(e=0,d=0)$, and $p_{10}=P(e=1,d=0)$. Define the $\phi$ coefficient 
\[\phi=\frac{p_{11}p_{00}-p_{01}p_{10}}{\sqrt{(p_{11}+p_{10})(p_{00}+p_{10})(p_{01}+p_{11})(p_{01}+p_{00})}}.\]
In prior work we have proven the formula
\begin{equation}
\label{Tformula}
    T=1-|\phi|
\end{equation}
\citep[Theorem 1]{Knaeble2023}.
A warrant for causal inference is provided by $\eta>T$.

\subsection{Introductory Example Application}
Does cigarette smoking cause chronic obstructive pulmonary disease (COPD)? \citet{Terz} observed a positive association between smoking and COPD. In the table below the exposure, $E$, is smoking, and the outcome, $D$, is COPD.
\begin{table}[ht!]
\centering
\caption{A contingency table showing an observed association (Relative Risk, RR $=2.8$) between Cigarette Smoking and Chronic Obstructive Pulmonary Disease (COPD).}
\label{COPD}
\begin{tabular}{rcc}
\toprule
 & \multicolumn{2}{c}{Smoking}\\
\cmidrule{2-3}
COPD&No&Yes\\
\hline
Yes&318&1,631\\
No&4,679&7,538\\
\hline
\end{tabular}
\end{table}

Among monozygotic twins, the probandwise concordance of smoking has been estimated to be about $BC_E=67\%$ \citep{Kaprio1984} and the probandwise concordance of COPD has been estimated to be about $BC_D=20\%$ \citep{Ingebrigtsen2010}. From the table we compute $P(e=1)=0.65$, $P(d=1)=0.14$, and $\phi(e,d)=0.16$. With these values in (\ref{lb1}) and (\ref{Tformula}) we obtain $\eta\geq l_\eta=0.79<0.84=T$, which does not on its own provide a warrant for causal inference. Theoretically, in the absence of further assumptions or scientific knowledge, given our propensity-prognosis model of the data generating process, unmeasured covariates could provide a non-causal explanation for the data in Table \ref{COPD}; our empirical lower bound on the randomness was not sufficient. Later in Section \ref{success} we will meet an example application where the empirical lower bound on the randomness is sufficient and the observational data does provide a warrant for causal inference. 
\subsection{Outline}

In a prior paper we showed how to compute the threshold $T$ asymptotically \citep{Knaeble2023}. In this paper we make the following additional contributions. We
\begin{itemize}
\item estimate an empirical lower bound for $\eta$ from measures of concordance transported from studies of monozygotic twins,
\item relate the threshold $T$ to common measures of association in Section \ref{sens},
\item implement a finite population correction to conduct observational causality testing with $(1-\alpha)\times 100\%$ confidence in Section \ref{fpc},
\item incorporate measured covariate data for higher powered tests of causality in Section \ref{scov}, and
\item introduce a criterion for covariate selection in Section \ref{csel}.
\end{itemize}
A discussion occurs in Section \ref{disc}. Numerous example applications are provided throughout.
\section{The Threshold, $T$, as a Sensitivity Parameter}
\label{sens}
We proceed to describe formulas for computing $T$ from the observed prevalence of the exposure, the observed prevalence of the disease, and an observed measure of association. 
Write the standard (open) 3-simplex 
\begin{equation}
\label{simp}
\Delta := \left\{ (p_{01},p_{11},p_{00},p_{10})\in\mathbb{R}^4 \colon \ 
p_{01}, \ p_{11}, \ p_{00}, \ p_{10}>0; 
\ \  p_{01}+p_{11}+p_{00}+p_{10}=1 \right\}.
\end{equation}
Denote the marginal relative frequencies by $p_e=P(e=1)$ and $p_d=P(d=1)$. 
The set $\Delta$ can be reparametrized with $(p_e,p_d,\xi)$, where $\xi$ is a measure of association. We have already considered $\xi=\phi$. Next we consider common measures of association known as the risk difference (RD), 
\[RD=\frac{p_{11}}{(p_{11}+p_{10})}-\frac{p_{01}}{(p_{01}+p_{00})},\]
the relative risk (RR), 
\[RR=\frac{p_{11}}{(p_{11}+p_{10})}\frac{(p_{01}+p_{00})}{p_{01}},\]
and the odds ratio (OR),
\[OR=\frac{p_{11}}{p_{10}}\frac{p_{00}}{p_{01}}.\]

While (\ref{Tformula}) gives an explicit formula for $T$ in terms of the relative frequencies of a contingency table,  
the following proposition provides formulas for computing $T$ from observed $(p_e,p_d,\xi)$, where $\xi\in\{RD,RR,OR\}$. We anticipate that the following Proposition may prove useful during systematic reviews or meta analyses.
\begin{proposition}
\label{P}
Suppose $p_e,p_d\in(0,1)$. Define $\lambda=\sqrt{\frac{p_e(1-p_e)}{p_d(1-p_d)}}$. For $\xi \in \{RD,RR,OR\}$, 
\begin{equation}
\label{RDformula}
    T=
    \begin{cases}
 1-|RD|\lambda & \textrm{if } \xi=RD\\
  1- \left|\frac{p_d(RR-1)}{1+p_e(RR-1)} \right| \lambda & \textrm{if } \xi=RR\\
  1-\left|\frac{p_d(u-1)}{1+p_e(u-1)} \right| \lambda & \textrm{if } \xi=OR,\\
\end{cases}.
\end{equation}
where $u=\frac{-a+\sqrt{a^2-4 p_e m}}{2 p_e}$, $a = p_d\left(OR-1\right)+\left(1-p_e\right)-p_eOR$, and $m = (p_e-1)OR$.
\end{proposition}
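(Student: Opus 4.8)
The plan is to start from the identity $T = 1 - |\phi|$ recorded in (\ref{Tformula}) and rewrite $|\phi|$ in terms of $(p_e,p_d,\xi)$ for each choice of $\xi$. The first move is to simplify $\phi$ using the marginal identities $p_{11}+p_{10}=p_e$, $p_{01}+p_{00}=1-p_e$, $p_{01}+p_{11}=p_d$, and $p_{00}+p_{10}=1-p_d$, which collapse the radicand in the denominator of $\phi$ to $p_e(1-p_e)p_d(1-p_d)$. Thus $\phi = (p_{11}p_{00}-p_{01}p_{10})/\sqrt{p_e(1-p_e)p_d(1-p_d)}$, and the whole proposition reduces to expressing the numerator $p_{11}p_{00}-p_{01}p_{10}$ through the chosen measure of association.

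The risk-difference case is the algebraic backbone. Writing $RD = p_{11}/p_e - p_{01}/(1-p_e)$ and clearing denominators, I would verify the identity $p_{11}p_{00}-p_{01}p_{10} = RD\cdot p_e(1-p_e)$ by substituting $1-p_e = p_{01}+p_{00}$ and $p_e=p_{11}+p_{10}$ and cancelling. Feeding this into the simplified $\phi$ gives $\phi = RD\cdot\sqrt{p_e(1-p_e)/(p_d(1-p_d))} = RD\,\lambda$, hence $T = 1 - |RD|\lambda$.

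For $RR$ and $OR$ I would pass to the stratum-specific risks $r_1 = p_{11}/p_e$ and $r_0 = p_{01}/(1-p_e)$, for which $RD = r_1 - r_0$ and the prevalence constraint reads $p_d = p_e r_1 + (1-p_e)r_0$. In the $RR$ case, $RR = r_1/r_0$ gives $r_1 = RR\cdot r_0$; substituting into the constraint yields $r_0 = p_d/(1+p_e(RR-1))$ and hence $RD = r_0(RR-1) = p_d(RR-1)/(1+p_e(RR-1))$, which together with the $RD$ formula produces the stated expression. The $OR$ case follows the same template once I introduce the implied relative risk $u := r_1/r_0$: the relation $OR = r_1(1-r_0)/((1-r_1)r_0) = u(1-r_0)/(1-u r_0)$ combined with $r_0 = p_d/(1+p_e(u-1))$ eliminates $r_0$ and leaves a quadratic in $u$, which I expect to simplify to $p_e u^2 + a u + m = 0$ with $a$ and $m$ exactly as defined. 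Once $u$ is known, $RD = p_d(u-1)/(1+p_e(u-1))$ precisely as in the $RR$ case.

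The main obstacle is the $OR$ derivation, in two respects. First, the algebra taking the cross-multiplied relation $(OR-u)(1+p_e(u-1)) = u(OR-1)p_d$ to the clean quadratic must be carried out carefully so that the coefficients match $a = p_d(OR-1)+(1-p_e)-p_e OR$ and $m=(p_e-1)OR$. Second, and more conceptually, I must justify selecting the $+$ root. Here I would note that the product of the roots equals $m/p_e = -OR(1-p_e)/p_e < 0$, so the two roots have opposite signs; since $u=r_1/r_0>0$ is required, exactly one root is admissible, and because $m<0$ forces $\sqrt{a^2-4p_e m} > |a|$, the root $u = (-a+\sqrt{a^2-4p_e m})/(2p_e)$ is the positive one. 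The degenerate cases $RR=1$ and $OR=1$ (equivalently $u=1$) give $RD=0$, $\phi=0$, and $T=1$, consistent with the formulas read as limits.
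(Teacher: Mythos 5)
Your proposal is correct, and it follows the same overall chain as the paper's proof: reduce everything to the risk difference via $T=1-|RD|\lambda$, express $RD$ in terms of $RR$, and then characterize the implied relative risk as a root of the quadratic $p_e u^2+au+m=0$. The difference is in how you reach the intermediate identities. The paper works directly with the cell probabilities $p_{ij}$ (e.g., it computes the ratio $OR/RR$ and rearranges), whereas you reparametrize through the stratum-specific risks $r_0=p_{01}/(1-p_e)$ and $r_1=p_{11}/p_e$, using the constraint $p_d=p_e r_1+(1-p_e)r_0$ to solve for $r_0$ and eliminate it. Your cross-multiplied relation $(OR-u)(1+p_e(u-1))=u(OR-1)p_d$ does expand to exactly the stated quadratic, so that step goes through. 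The one place where your argument is genuinely stronger is the root selection: the paper only checks consistency at $OR=1$ (``we insist that $RR=1$ when $OR=1$''), which verifies the sign at a single point, while your observation that the product of the roots is $m/p_e=-(1-p_e)OR/p_e<0$ shows the roots have opposite signs for \emph{every} admissible $(p_e,p_d,OR)$, so the positive root $u=(-a+\sqrt{a^2-4p_em})/(2p_e)$ is the unique admissible one throughout. That closes a small logical gap in the published argument. The only caveat worth stating explicitly is that $r_0>0$ and $r_1<1$ (so that $RR$, $OR$, and $u$ are defined) — this is guaranteed by working on the open simplex $\Delta$ in \eqref{simp}.
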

\noindent Proposition \ref{P} is proven in Appendix \ref{A1}.

The middle row of (\ref{RDformula}) shows that $T$ is decreasing in $p_d$ given $RR$ and $p_e$. Studies of rare outcomes may require more randomness to warrant causal inference. From the data of Table \ref{COPD} we can compute $RR=2.8$ and $T=0.84$. The data in the following Table \ref{T1} were taken from \citet{Steg}. From the data of Table \ref{T1} we can compute $RR=5.8$ and $T=0.87$. Despite the higher relative risk value the threshold has risen. This is primarily due to the observed prevalence of diabetes being lower than the observed prevalence of COPD. Further insights of Proposition \ref{P} with illustrations and corollaries can be found in \citet[Section 3]{https://doi.org/10.48550/arxiv.2103.05692}.

\begin{table}[ht]
\centering
\caption{A contingency table showing an observed association (RR $=5.8$, $T=0.87$) between diabetes and stroke.}
\label{T1}
\begin{tabular}{rcc}
\toprule
 & \multicolumn{2}{c}{Diabetes}\\
\cmidrule{2-3}
Stroke&No&Yes\\
\hline
Yes&1,823&647\\
No&110,986&6,277\\
\hline
\end{tabular}
\end{table}
\section{Finite Population Analysis}
\label{fpc}
On a smaller finite population of size $n$ we should relax the equality constraints of (\ref{margconst}).
Let $\mathcal P_n(S)$ denote the space of distributions on $S=\{(\pi,r):0<\pi,r<1\}$, where each distribution $\mu_n\in \mathcal P_n(S)$ has $n$ point masses of equal weight indexed by $i$. Define the vector valued functional
\begin{equation}
\label{vfunctional}
f(\mu_n)=\left(\sum_{i=1}^n(1-\pi_i)r_{i}, \sum_{i=1}^n\pi_ir_{i}, \sum_{i=1}^n(1-\pi_i)(1-r_{i}), \sum_{i=1}^n\pi_i(1-r_{i})\right).
\end{equation}

Let $x_0=(np_{01}, np_{11}, np_{00}, np_{10})$ be our observed vector of frequency counts, and write $p_0=x_0/n$. The multinomial distribution $Mult(n,p_0)$ has expectation equal to $x_0$ and we may denote its covariance matrix with $\Sigma_m$. The matrix $\Sigma_m$ is a function of $x_0$. It is singular but it has a Moore-Penrose pseudo-inverse which we denote with $\Sigma_m^\dag$.

Given $\alpha$ with $0<\alpha<1$ we may replace $\mathcal P_n(S)$ with 
\begin{equation}
\label{finiteform}
\mathcal P_{n,1-\alpha}(S)=\{\mu_n\in \mathcal P_n(S): (x_0-f(\mu_n))^t \Sigma_{m}^{\dag}(x_0-f(\mu_n))<\chi^2_{3}(1-\alpha)\}.   
\end{equation}
Given $x_0$ the set $\mathcal P_{n,1-\alpha}(S)$ is an approximate and conservative $(1-\alpha)\times 100\%$ confidence set for $\mu_n$; see Appendix \ref{appmatrix} for details. We may define
\begin{equation}
    \label{finiteT}
T_n(1-\alpha)=\max_{\mathcal P_{n,1-\alpha}(S)}\eta. 
\end{equation} 
A warrant for causal inference is provided by $\eta>T_n(1-\alpha)$, with $(1-\alpha)\times 100\%$ confidence.

\subsection{Computing a Finite Population Correction}
\label{fpcs}
To compute $T_n(1-\alpha)$ in practice we resample from $x_0$ with replacement. We produce about $100,000$ synthetic samples each of size $n$. For each synthetic sample $x$ we compute $T(x)$ using (\ref{Tformula}) provided that the condition $(x-x_0)^t \Sigma_{m}^{\dag}(x-x_0)<\chi^2_{3}(1-\alpha)$ is satisfied. The maximum computed $T$ value is then returned as $T_n(1-\alpha)$.

\subsection{Alternatives}
Two alternative approaches to finite population correction are worth mentioning. Both involve the distribution of synthetic $T$ values just described. The first alternative computes $T$ using (\ref{Tformula}) for each synthetic sample and returns the $1-\alpha$ quantile of the resulting $T$ distribution. The second alternative reports $T(x_0)$ along with its standard error $SE$ which is the standard deviation of synthetic $T$ values. 
\subsection{Does marijuana use cause hard drug use?}
\label{success}
The Population Assessment of Tobacco and Health (PATH) Study in the United States observed a strong association between marijuana use and the use of harder drugs, such as cocaine, methamphetamine, speed, heroin, etc. \citep{PATH}. We obtained data from \citet{PATH} on individual marijuana use, hard drug use, age, and gender (male or female). Detailed descriptions of the utilized variables are provided in Appendix \ref{mar}. Table \ref{Drugs2} shows frequency counts of marijana use, $E$, and hard drug use, $D$.

\begin{table}[ht!]
\centering
\caption{A contingency table showing an observed association (RR $=11.4$, $T=0.58$) between marijuana use and hard drug use.}
\label{Drugs2}
\begin{tabular}{rcc}
\toprule
 & \multicolumn{2}{c}{Marijuana use}\\
\cmidrule{2-3}
Hard drug use&No&Yes\\
\hline
Yes&114&978\\
No&3,649&1,864\\
\hline
\end{tabular}
\end{table}

From the data we compute $P(e=1)=0.43$ and $P(d=1)=0.17$. From \citet{Kendler1998} we estimate $BC_E\approx 0.50$ and from \citet{Kendler1998b} and \citet{T96} we estimate $BC_D\approx 0.40$. Evaluating (\ref{lb1}) with those values produces $l_\eta=0.75$. From the data we compute also $\phi(e,d)=0.42$ to determine the threshold $T=0.58$. Since $\eta\geq l_\eta=0.75>0.58=T$ we have a warrant for causal inference.

However, what if we plan to target our intervention to a subpopulation? Suppose we are interested in conducting causal inference conditional on a vector of covariates. For instance, what if we wanted to infer that marijuana use causes harder drug use on a subpopulation of older males? Conditional data is shown in Table \ref{Drugs}. The conditional association has increased to $RR=14$. From (\ref{Tformula}) we compute the conditional threshold $T=0.50$. Implementing our finite population correction of Section \ref{fpcs} increases that value to $T_{2,000}(95\%)=0.55$.
\begin{table}[ht!]
\centering
\caption{A contingency table showing an observed association (RR $=14$, $T=0.50$, $T_{2,000}(95\%)=0.55$) between marijuana use and the use of hard drugs on a subpopulaton of adult males (age $>35$ years).}
\label{Drugs}
\begin{tabular}{rcc}
\toprule
 & \multicolumn{2}{c}{Marijuana use}\\
\cmidrule{2-3}
Hard drug use&No&Yes\\
\hline
Yes&34&433\\
No&1,015&518\\
\hline
\end{tabular}
\end{table}

\section{Incorporating measured covariate data into the analysis}
\label{scov}
In the previous section we conditioned on a covariate vector to focus our attention on a smaller subpopulation of interest, and then we conducted causal inference on that subpopulation, with a finite population correction. In this section we will remain focused on the original population but incorporate measured covariate data into our analysis. We will show how to compute the threshold $T$ in this more general setting. We introduce a criterion for covariate selection in Section \ref{csel}.

Here we incorporate measured attributes of individuals into a covariate vector $c$ which takes values in a bounded set $C$. 
We denote the vector of covariate measurements with $c$. 
For each individual, covariate measurements should be made prior to exposure and at or prior to the time when $\pi$ is defined. 
We write $m$ for the probability measure corresponding with the population distribution of the vector $c$ over $C$. We write $\mu_c$ for the distribution of $(\pi,r)$ conditional on $c$, and express $\mu$ as a compound distribution: $\mu=\int_{C} \mu_c dm$. The randomness retains its original definition (see \eqref{etadef}).

As before we let $S$ denote the unit square $S=\{(\pi,r):0<\pi,r<1\}$, and we define $\mathcal P_c(S)$ as the class of distributions that produce exact, non-causal, explanations of the observations of $(e,d;c)$. Formally, from the observations of $(e,d;c)$ we know the relative frequencies
\begin{align*}
&p_{01|c}=P(e=0,d=1|c), &&
p_{11|c}=P(e=1,d=1|c), \\
&p_{00|c}=P(e=0,d=0|c),   \quad \textrm{~and~} && 
p_{10|c}=P(e=1,d=0|c),
\end{align*} 
and $\mathcal P_c(S)\subset \mathcal P(S)$ is the class of feasible distributions, satisfying, for each $c$, 
\begin{align*}
&\int(1-\pi)rd\mu(c)=p_{01|c}, &&
 \int \pi rd\mu(c)=p_{11|c}, \\
& \int(1-\pi)(1-r)d\mu(c)=p_{00|c},   \quad \textrm{~and~} && 
\int \pi(1-r)d\mu(c)=p_{10|c}.
\end{align*} 
When $c=\emptyset$ then $\mathcal P_c(S)$ reduces to $\mathcal P(S)$. 
Here we generalize the definition of the threshold $T$ to
\begin{equation}
\label{SRc}
T_c := \max_{\mu\in \mathcal P_c (S)}\eta
\end{equation}
The interpretation is the same as before: if $\eta>T_c$ then causal inference is warranted.

\subsection{Computing the threshold, $T_c$, from observations of $(e,d;c)$}\label{results2}
We write $p_{01|c}=P(e=0,d=1|c)$, $p_{11|c}=P(e=1,d=1|c)$, $p_{00|c}=P(e=0,d=0|c)$, \textrm{and}\quad $p_{10|c}=P(e=1,d=0|c)$, and define 
\[\phi(c) :=\frac{p_{11|c}p_{00|c} - p_{01|c}p_{10|c}}{\sqrt{(p_{11|c}+p_{10|c})(p_{00|c}+p_{10|c})(p_{01|c}+p_{11|c})(p_{01|c}+p_{00|c})}}.
\]
We write $p_{e|c}=P(e=1|c)$ and $p_{d|c}=P(d=1|c)$, and let
\begin{subequations}
\begin{align}
l^2_{\pi(c)}&=p_{d|c}(P(e=1|d=1,c)-p_{e|c})^2+(1-p_{d|c})(P(e=1|d=0,c)-p_{e|c})^2,\\
u^2_{\pi(c)}&=P(e=1|c)(1-P(e=1|c),\\
l^2_{r(c)}&=p_{e|c}(P(d=1|e=1,c)-p_{d|c})^2+(1-p_{e|c})(P(d=1|e=1,c)-p_{d|c})^2, \textrm{~and}\\ 
u^2_{r(c)}&=P(d=1|c)(1-P(d=1|c).
\end{align}
\end{subequations}
With $p_{01}=P(e=0,d=1)$, $p_{11}=P(e=1,d=1)$, $p_{00}=P(e=0,d=0)$, \textrm{and}\quad $p_{10}=P(e=1,d=0)$, define \begin{equation}\label{hatvar}
\sigma^2_{\hat{\pi}(c)}=\int_C ((p_{10|c}+p_{11|c})-(p_{10}+p_{11}))^2dm 
\quad \textrm{and} \quad 
\sigma^2_{\hat{r}(c)}=\int_C ((p_{01|c}+p_{11|c})-(p_{01}+p_{11}))^2dm.
\end{equation}
Write $\sigma^2_{e}$ and $\sigma^2_{d}$ for the measured marginal variances of $e$ and $d$. Write $\sigma^2_{e|c}$ and $\sigma^2_{d|c}$ for the measured variances of $e$ and $d$ conditional on $c$. Write $\sigma^2_{\pi|c}$ and $\sigma^2_{r|c}$ for the variances of $\pi$ and $r$ conditional on $c$. The unmeasured variables $\sigma^2_{\pi|c}$ and $\sigma^2_{r|c}$ are sufficient parameters on $\mathcal{P}_{c}(S)$. The following result isn't a closed form solution, but rather restates the problem as a constrained quadratic optimization problem in terms of conditional variances. 
\begin{theorem}
\label{abcdc}
Suppose $C$ is bounded and for each $c\in C$ that $P(e=1|c)\in (0,1)$ and $P(d=1|c)\in (0,1)$. Define
\begin{subequations}
\label{variables}
\begin{align}
  \label{psi}
  \tau:=\min_{
  \sigma_{\pi|c}^2,  \sigma_{r|c}^2} & \left(\sigma^2_{\hat{\pi}(c)}+\int_C \sigma^2_{\pi|c}dm\right) \ \left(\sigma^2_{\hat{r}(c)}+\int_C \sigma^2_{r|c}dm\right)\\
  \label{variablesB}          
  \textrm{such that } \ &
  \frac{\sigma^2_{\pi|c}\sigma^2_{r|c}}{\sigma^2_{e|c}\sigma^2_{d|c}}= \phi^2(c) 
  &&  \forall c \in C \\
&l^2_{p(c)} \leq \sigma^2_{\pi|c} \leq u^2_{p(c)} 
&&  \forall c \in C\\
&l^2_{r(c)} \leq \sigma^2_{r|c} \leq u^2_{r(c)}
&&  \forall c \in C. 
\end{align}
\end{subequations}
The threshold, $T_c$, of sufficient randomness for causal inference, defined in \eqref{SRc}, can be computed with the following formula: 
\[T_c=1-\sqrt{\tau}/(\sigma_e\sigma_d).\] 
\end{theorem}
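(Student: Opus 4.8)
The plan is to turn the infinite-dimensional problem \eqref{SRc} into the finite-dimensional program \eqref{variables} in three moves: rewrite $\eta$ through the latent variances, split those variances by the law of total variance, and then pin down the feasible range of the conditional variances. First I would note that the observed marginals fix $\bar\pi=P(e=1)$ and $\bar r=P(d=1)$, whence $\bar\pi(1-\bar\pi)=\sigma_e^2$ and $\bar r(1-\bar r)=\sigma_d^2$. Substituting into \eqref{etadef} gives $R_\pi R_r=\sigma_\pi\sigma_r/(\sigma_e\sigma_d)$ and hence $\eta=1-\sigma_\pi\sigma_r/(\sigma_e\sigma_d)$. Since $\sigma_e,\sigma_d$ are data, maximizing $\eta$ over $\mathcal P_c(S)$ is exactly minimizing $\sigma_\pi^2\sigma_r^2$; writing $\tau$ for that minimum already yields $T_c=1-\sqrt\tau/(\sigma_e\sigma_d)$, so the whole task reduces to showing $\tau$ equals the value of the program \eqref{variables}.

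Next I would apply the law of total variance in the covariate $c$:
\[
\sigma_\pi^2=\operatorname{Var}(E[\pi\mid c])+\int_C\sigma^2_{\pi|c}\,dm, \qquad \sigma_r^2=\operatorname{Var}(E[r\mid c])+\int_C\sigma^2_{r|c}\,dm.
\]
The constraints defining $\mathcal P_c(S)$ force $E[\pi\mid c]=p_{10|c}+p_{11|c}=P(e=1\mid c)$ and $E[r\mid c]=p_{01|c}+p_{11|c}=P(d=1\mid c)$, so the between-group terms are precisely the observed quantities $\operatorname{Var}(E[\pi\mid c])=\sigma^2_{\hat\pi(c)}$ and $\operatorname{Var}(E[r\mid c])=\sigma^2_{\hat r(c)}$ from \eqref{hatvar}: they are constants of the optimization. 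The only freedom lies in the within-group variances, whose $m$-averages are $\int_C\sigma^2_{\pi|c}\,dm$ and $\int_C\sigma^2_{r|c}\,dm$; this is what makes $(\sigma^2_{\pi|c},\sigma^2_{r|c})$ the sufficient parameters and reproduces the objective \eqref{psi}.

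The crux is to characterize, for each fixed $c$, which pairs $(\sigma^2_{\pi|c},\sigma^2_{r|c})$ are realizable by a conditional law $\mu_c$ matching the conditional table. The same constraints also give $E[\pi r\mid c]=p_{11|c}$, so the \emph{conditional covariance} is pinned to $\operatorname{Cov}(\pi,r\mid c)=p_{11|c}-p_{e|c}p_{d|c}=\phi(c)\,\sigma_{e|c}\sigma_{d|c}$. Cauchy--Schwarz ($|\operatorname{Corr}|\le 1$) then forces $\sigma^2_{\pi|c}\sigma^2_{r|c}\ge\phi^2(c)\,\sigma^2_{e|c}\sigma^2_{d|c}$, which is the equality \eqref{variablesB} once I argue the minimizer saturates it. Indeed, the objective \eqref{psi} is increasing in each $\int_C\sigma^2_{\pi|c}\,dm$ and $\int_C\sigma^2_{r|c}\,dm$, so any feasible law strictly above this hyperbola can be improved by shrinking one conditional variance, pushing the optimum onto the curve of perfect conditional correlation; moreover a short computation shows the lower-left corner $(l^2_{\pi(c)},l^2_{r(c)})=(\phi^2(c)\sigma^2_{e|c},\phi^2(c)\sigma^2_{d|c})$ lies \emph{below} the hyperbola when $\phi^2(c)<1$, so the active lower boundary is genuinely the arc \eqref{variablesB} and not a corner. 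The remaining range constraints arise from a second total-variance step—the Bernoulli extreme supplies the upper bounds $u^2_{\pi(c)}=p_{e|c}(1-p_{e|c})$, $u^2_{r(c)}=p_{d|c}(1-p_{d|c})$, while conditioning further on $d$ (resp.\ $e$) inside the $c$-stratum supplies $l^2_{\pi(c)},l^2_{r(c)}$, which also mark the endpoints of the hyperbola arc inside the box.

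The main obstacle I anticipate is the achievability half of the boundary claim: showing that for every target pair on the arc \eqref{variablesB} within the stated ranges there really is a distribution $\mu_c$ supported in the open square $S$ that realizes those conditional variances together with the fixed conditional covariance. Perfect correlation concentrates $\mu_c$ on a line segment through $(p_{e|c},p_{d|c})$, and one must verify this segment (in particular at the endpoints, where one marginal is two-point on $\{0,1\}$) fits inside $(0,1)^2$ while matching the prescribed moments. This is the conditional analogue of the extremal construction behind the unconditional identity \eqref{Tformula}; once it is in hand, assembling the two total-variance decompositions over $c$ gives $\tau=\min\sigma_\pi^2\sigma_r^2$ as the value of \eqref{variables}, and therefore $T_c=1-\sqrt\tau/(\sigma_e\sigma_d)$.
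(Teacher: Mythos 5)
Your proposal is correct and follows the same overall route as the paper's proof in Appendix~\ref{A3}: reduce \eqref{SRc} to minimizing $\sigma^2_\pi\sigma^2_r$ (the factor $\sigma_e\sigma_d$ being fixed by the marginal constraints), split each variance by the law of total variance over $c$ so that the between-stratum terms \eqref{hatvar} are observable constants and only the within-stratum variances remain free, impose the box constraints, and argue that the minimizer saturates the hyperbola \eqref{variablesB}. The one genuine difference is localized: where the paper obtains $\sigma^2_{\pi|c}\sigma^2_{r|c}\geq\phi^2(c)\sigma^2_{e|c}\sigma^2_{d|c}$ by citing \citet[Theorem 1]{Knaeble2023} stratum-by-stratum, you derive it self-contained from the observation that the moment constraints pin the conditional covariance to $p_{11|c}-p_{e|c}p_{d|c}=\phi(c)\sigma_{e|c}\sigma_{d|c}$ and then apply Cauchy--Schwarz; this buys a shorter, more transparent argument, and as a by-product your check that the corner $(l^2_{\pi(c)},l^2_{r(c)})=(\phi^2(c)\sigma^2_{e|c},\phi^2(c)\sigma^2_{d|c})$ lies on or below the hyperbola is more careful than the paper's one-line claim that one ``can consider only'' distributions satisfying equality. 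You also attribute the lower bounds $l^2_{\pi(c)}$, $l^2_{r(c)}$ to conditioning on $d$ (resp.\ $e$) within the $c$-stratum, which is what the stated formulas actually encode, whereas the paper's proof text decomposes $\mu_c$ by $e$ to bound $\sigma^2_{\pi|c}$, which as written would produce the unobservable mean $E[\pi^2\mid c]/E[\pi\mid c]$ rather than $P(e=1\mid d,c)$. Finally, both you and the paper leave the achievability direction --- that every feasible pair $(\sigma^2_{\pi|c},\sigma^2_{r|c})$ on the arc is realized by some $\mu_c\in\mathcal P(S)$ with the prescribed conditional moments --- essentially as an assertion; you at least name it as the main remaining obstacle and sketch the perfectly-correlated segment construction, so nothing in your outline would fail, but that is the step where a fully rigorous write-up would require the most additional work.
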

\noindent Proposition \ref{abcdc} is proven in Appendix \ref{A3}.

We will often discretize the minimization problem in \eqref{variables} to approximate its solution. If there are few levels of $c$ then we can solve \eqref{variables} with a grid search to determine $\psi$ to arbitrary accuracy. If there are many levels of $c$ then we may employ branch and bound algorithms to solve \eqref{variables} and approximate $\tau$. 
\subsection{Vaccines and Covid-19}
\label{covex}
An early randomized controlled trial of the BNT162b2 mRNA Covid-19 Vaccine demonstrated effectiveness of the vaccine to prevent cases of Covid-19 and also severe cases of Covid-19, but the sample size was not large enough to determine whether or not the vaccine saved lives \citep{Polack2020}. Later and larger observational studies over a set time period have demonstrated that vaccinated individuals are at lower risk of death than unvaccinated individuals \citep{Scobie2021}, but in those later studies vaccination was no longer randomly assigned. Data from a susceptible population is shown in Table \ref{T3}.
\begin{table}[ht]
\centering
\caption{A contingency table showing an observed association (relative risk RR $=0.12$, $T=0.75$) between vaccination and death.}
\label{T3}
\begin{tabular}{rcc}
\toprule
 & \multicolumn{2}{c}{Vaccination}\\
\cmidrule{2-3}
Death&No&Yes\\
\hline
Yes&1,006&188\\
No&6,089&11,102\\
\hline
\end{tabular}
\end{table}

Here the exposure $E$ is vaccination, and the outcome $D$ is death. From the observed data we compute $\phi(e,d)=-0.25$. The threshold is therefore $T=1-|\phi(e,d)|=0.75$. Vaccines for Covid-19 were originally administered to elderly individuals, so perhaps we should adjust our analysis for age. Note that these data were selected, not as part of a scientific study of vaccines and Covid-19, but rather to simply demonstrate certain aspects of our methodology in a setting that is familiar to many readers. The data of Table \ref{tv} were obtained from \citet{Scobie2021} and prepared as described in Appendix \ref{vacc}.

\begin{table}[ht!]
\centering
\caption{A $2\times 2\times 3$ contingency table relating Covid 19 vaccination, death, and age.}
\label{tv}
\begin{tabular}{rccrccrccr}
&  \multicolumn{2}{c}{Age 18--49} && \multicolumn{2}{c}{50--64} && \multicolumn{2}{c}{65+} \\
\toprule
 & \multicolumn{2}{c}{Vaccine} && \multicolumn{2}{c}{Vaccine} && \multicolumn{2}{c}{Vaccine} \\
\cmidrule{2-3} \cmidrule{5-6} \cmidrule{8-9} 
Death&No&Yes&Death&No&Yes&Death&No&Yes\\
\hline
Yes&155&7&Yes&290&23&Yes&561&158\\
No&2,666&1,523&No&1,755&2,447&No&1,668&7,132\\
\hline
\end{tabular}
\end{table}

Here the exposure $e=1$ is vaccination, the outcome $d=1$ is death, and our covariate $c$ is age. The data in Table \ref{tv} give an age-adjusted relative risk of adjRR $=0.08$. We applied Theorem \ref{abcdc} to the data of Table \ref{tv} to compute $T_c=0.70$. Adjustment for age lowered the threshold from $T=0.75$ to $T_c=0.70$. In general, incorporation of more measured covariates into the analysis can only lower $T$ further since the optimization problem in (\ref{SRc}) becomes more constrained.

\subsection{Covariate Selection}
\label{csel}
There is some controversy regarding covariate selection. Is it appropriate to adjust for all pre-treatment or pre-exposure covariates? \citet{DM} have conducted simulations to investigate M bias and butterfly bias. More relevant to our proposed methodology is the issue of z-bias \citep{Wool,DR,Pim}. 

The idea behind the issue of z-bias is the following. If there are irrelevant, pre-exposure variables we might not want to condition on them because it is desirable for irrelevant variables to determine exposure assignment. Conditioning on those irrelevant variables could increase the proportion of residual, exposure variation due to unmeasured confounders. Utilizing the language of our proposed methodology, we can analogously formulate the ample randomness criterion. 

\begin{definition}[The ample randomness criterion]
\label{ample}
Covariates should be selected to increase $\eta/T$.
\end{definition}

Analysis of Theorem \ref{abcdc} suggests that lower $T$ values are possible when $s(c):=(P(e=1|c),P(d=1|c))$ varies over the unit square, $S=(0,1)^2$, and that fact can guide covariate selection. Note, given $c$, that $P(e=1|c)$ is a propensity score and $P(d=1|c)$ is a prognostic score. A desire for variation in $s(c)$ is consistent with the common practice of selecting pre-exposure covariates that are strong predictors of $e$ and $d$.

However, caution is advised when conditioning on a covariate that was itself randomly assigned, as that act of conditioning may eliminate part of the randomness $\eta$. We recommend conditioning on fixed characteristics of individuals such as genomic variables, birth date (or age), and birthplace, rather than chance events that happened to occur. However, in some situations it may be reasonable to condition on chance events to reduce confounding \citep{VanderWeele2011}.
\section{Discussion}
\label{disc}
There are numerous methods for sensitivity analysis, including but not limited to \citet{Frank,HHH,SAWA,EV,KD,Oster,CH,KOA}. As evident in \citet{Ding2014}, many of those methods can be traced back to the seminal work of \citet{Cornfield}. In \citet{Cornfield} the following is written:
\begin{quote}The magnitude of the excess lung-cancer risk among cigarette smokers is so great that the results can not be interpreted as arising from an indirect association of cigarette smoking with some other agent or characteristic, since this hypothetical agent would have to be at least as strongly associated with lung cancer as cigarette use; no such agent has been found or suggested.
\end{quote}
Their argument is reasonable, but not watertight. Absence of evidence is not evidence of absence; see \cite{FERES2023}. The introduced methodology of this paper utilizes a sensitivity parameter $T$, but the introduced methodology becomes something more than a sensitivity analysis when paired with knowledge of the randomness $\eta$. The condition $T<\eta$ is evidence for the absence of confounding, assuming SUTVA.

The main limitation of our proposed methodology is its reliance on SUTVA. Here we are concerned not with the no-interference component of SUTVA but rather we are concerned that the potential outcomes are not well defined. We have interpreted the discordance of monozygotic twin studies as evidence for randomness within processes that give rise to exposures and outcomes, but that natural randomness could assign multiple versions of treatment. 

Our methodology conclusively establishes the existence of a cause after the time at which propensity and prognosis probabilities are defined, but further scientific investigation, e.g. the negative controls of \cite{Bjornevik2022}, will likely be needed. Our proposed methodology is not meant to stand alone but designed to play a supporting role; see Section \ref{sens}, Definition \ref{ample}, and scientifically oriented discussion in \cite{Rosenbaum2015} and \cite{Pearce}. 

Our formulas and algorithms for computing $T$, $T_n(1-\alpha)$, and $T_c$ are efficient as long as the covariate vector is low dimensional, but it would be time consuming to compute $T_c$ with a high dimensional vector of covariates and a finite population correction. One way to overcome this curse of dimensionality is to condition on the joint-probabilities of $(e,d)$ instead of $c$; c.f. \cite{Rosenbaum1983}.

Twin studies are known to provide value \citep{Hagenbeek2023}, and there are many twin registries and ongoing longitudinal twin studies providing opportunities to analyze data from exposure discordant twins and subsequent disease \citep{Sahu2016}. There is reason to believe that past populations of twins were representative of the general population \citep{Lykken1982}, but more recent populations of monozygotic twins may be less representative of the general population due to modern assisted reproductive technology \citep{Vitthala2008}. Also, it's possible that modern behavior is less random; c.f. \citet{VanderWeele2010}. 

The introduced methodology of this paper is flexible enough to allow an upper bound of $R^2_\pi$ to be estimated from one study while an upper bound of $R^2_r$ is estimated from another study, provided valid transport. A basic requirement for valid transport is that the observed proportion $P(e=1)$ should match the proportion exposed in the twin study, and likewise for the outcome; see \citet[Figure 2b]{Hagenbeek2023}. For brevity we have included numerous simple examples of transport here, but in practice we recommend more thorough and careful literature reviews to ensure valid transport from representative populations. Multiple twin studies can be analyzed, and attention can be paid to sampling error, ascertainment bias, and other issues as needed.  

Transportability is an emerging area of interest in causal inference \citep{Mitra2022}. Here we have shown how transport of measures of concordance in monozygotic twin studies can inform causal inference. It is important to note that our proposed methodology does not require the use of twin studies. Any approach that identifies randomness in the data generating process will suffice. We have used twin studies here as a simple way to demonstrate how to estimate the randomness $\eta$ empirically. We have shown a proof of concept that goes beyond standard approaches to sensitivity analysis. We have shown that observational causality testing is possible.



\appendix
\section{Concordance formulas}
\label{concordances}
 Let $n$ be the number of twin pairs. Let $C$ be the count of pairs where both have the trait. Let $\tilde{D}$ be the count of twin pairs where one but not both have the trait. Let $U=n-C-\tilde{D}$ be the number of pairs where neither have the trait. Denote the probandwise concordance with $BC=2C/(2C+\tilde{D})=C/(C+\tilde{D}/2)$. Denote the pairwise concordance with $PC=C/(C+\tilde{D})$. Define $V=(U+C)/n$. $V$ is a less practical but more mathematical way of defining concordance. For individual $i$ let $\psi_i$ be their propensity probability of the trait. The proportion of those with the trait is $\bar{\psi}=(C+\tilde{D}/2)/n$. Note that when the trait is the exposure $E$ then $\bar{\psi}=P(e=1)$ and when the trait is the outcome $D$ then $\bar{\psi}=P(d=1)$. Due to common causes we assume $L:=\sum_{i=1}^n(\psi_i^2+(1-\psi_i)^2)/n\leq V$. The inequality will be tighter if the counts are obtained from studies of monozygotic twins reared apart, but there is still the common womb environment to consider. Define $\sigma^2_\psi = \sum_{i=1}^n (\psi_i-\bar{\psi})^2/n$ and $R^2=\sigma^2/(\bar{\psi}(1-\bar{\psi})$.
\begin{lemma}
\label{probandwiselemma}
    $V=1-2\bar{\psi}(1-BC) = 1-2\bar{\psi} \frac{1-PC}{1+PC}$
\end{lemma}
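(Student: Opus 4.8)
The plan is to prove both equalities by direct substitution of the pair-count definitions given above, treating the statement as algebraic bookkeeping rather than analysis. The two equalities are of slightly different character: the first links $V$ to the probandwise concordance $BC$, while the second is merely a reparametrization relating $BC$ to the pairwise concordance $PC$. I would handle them in turn.

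First I would simplify the left-hand side. Since $U = n - C - \tilde{D}$, the definition $V = (U+C)/n$ collapses to $V = (n-\tilde{D})/n = 1 - \tilde{D}/n$, which identifies $\tilde{D}/n$ as the quantity I must recover. Next, from $BC = C/(C+\tilde{D}/2)$ I obtain $1 - BC = (\tilde{D}/2)/(C+\tilde{D}/2)$, and from $\bar{\psi} = (C+\tilde{D}/2)/n$ I obtain $2\bar{\psi} = (2C+\tilde{D})/n$. Multiplying these two, the common factor $C+\tilde{D}/2$ cancels, leaving $2\bar{\psi}(1-BC) = \tilde{D}/n$; hence $1 - 2\bar{\psi}(1-BC) = 1 - \tilde{D}/n = V$, which gives the first equality.

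For the equivalence of the two right-hand expressions it suffices to show $1 - BC = (1-PC)/(1+PC)$, equivalently the standard concordance identity $BC = 2PC/(1+PC)$. Substituting $PC = C/(C+\tilde{D})$ and simplifying yields $(1-PC)/(1+PC) = \tilde{D}/(2C+\tilde{D})$, which matches $1 - BC = (\tilde{D}/2)/(C+\tilde{D}/2) = \tilde{D}/(2C+\tilde{D})$. There is no genuine obstacle here: every step is a one-line simplification, and the only point requiring care is that the cancellations are legitimate, which holds precisely when the concordances are well defined, i.e. when $2C + \tilde{D} > 0$.
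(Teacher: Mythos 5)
Your proposal is correct and follows essentially the same route as the paper's proof: both reduce $2\bar{\psi}(1-BC)$ to $\tilde{D}/n$ by cancelling the common factor $C+\tilde{D}/2$, and both verify $(1-PC)/(1+PC)=\tilde{D}/(2C+\tilde{D})=1-BC$ for the second equality. No gaps.
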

\begin{proof} We compute
    \begin{align*}
1- 2\bar{\psi}\left(1-BC\right)=1+ 2\bar{\psi}\left(BC-1\right) &= 1+2\left(\frac{C+\tilde{D}/2}{n}\right)\left(\frac{C}{C+\tilde{D}/2}-1\right)\\
&= 1+2\left(\frac{C+\tilde{D}/2}{n}\right)\left(\frac{-\tilde{D}/2}{C+\tilde{D}/2}\right)\\
&= 1-\frac{\tilde{D}}{n}= \frac{n-\tilde{D}}{n}= \frac{U+C}{n}=V.
\end{align*}
Also, 
\begin{equation}
\label{e:prob2pair}
\frac{1-PC}{1+PC}=\frac{(\tilde{D}/(C+\tilde{D}))}{(2C+\tilde{D})/(C+\tilde{D})}=\frac{\tilde{D}}{2C+\tilde{D}}=1-BC.
\end{equation}
\end{proof}
\begin{lemma}
\label{sigmalemma}
    $R^2=1 -\frac{\left(1-L \right)}{2\bar{\psi}(1-\bar{\psi})}$
\end{lemma}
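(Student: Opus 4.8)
The plan is to bypass the twin counts $C$, $\tilde D$, $U$ entirely and reduce the claimed identity to a purely algebraic statement about the first two moments of the $\psi_i$. Since $R^2=\sigma^2_\psi/(\bar\psi(1-\bar\psi))$ by definition, the lemma is equivalent to
\[
\sigma^2_\psi = \bar\psi(1-\bar\psi) - \tfrac{1}{2}(1-L),
\]
so I would clear the denominator $\bar\psi(1-\bar\psi)$ at the very start and work with this cleaner target throughout.

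First I would record the two moment expressions needed. Abbreviating the second moment by $m_2:=\frac{1}{n}\sum_{i=1}^n\psi_i^2$, the usual variance identity gives $\sigma^2_\psi = m_2-\bar\psi^2$. Next I would expand the definition of $L$: since $(1-\psi_i)^2 = 1-2\psi_i+\psi_i^2$, each summand $\psi_i^2+(1-\psi_i)^2$ equals $2\psi_i^2-2\psi_i+1$, whence $L = 2m_2 - 2\bar\psi + 1$ and therefore $\tfrac{1}{2}(1-L) = \bar\psi - m_2$.

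Finally I would substitute both expressions into the right-hand side of the reduced identity and watch the terms collapse:
\[
\bar\psi(1-\bar\psi) - \tfrac{1}{2}(1-L) = (\bar\psi-\bar\psi^2) - (\bar\psi - m_2) = m_2 - \bar\psi^2 = \sigma^2_\psi,
\]
which proves the reduced form and hence the lemma after multiplying back by $\bar\psi(1-\bar\psi)$.

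I do not anticipate a genuine obstacle; the argument is elementary bookkeeping. The only point demanding care is that the quantity $\frac{1}{n}\sum\psi_i^2$ enters \emph{both} $\sigma^2_\psi$ and $L$, and that the factor of two distinguishing $L$ from a bare second moment is tracked correctly, so that the linear $\bar\psi$ contributions cancel exactly as claimed rather than leaving a stray term.
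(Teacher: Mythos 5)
Your proposal is correct and is essentially the same computation as the paper's: both expand $L$ to obtain $\tfrac{1}{2}(L-1)=\tfrac{1}{n}\sum_i\psi_i^2-\bar\psi$ and then combine this with the variance identity $\sigma^2_\psi=\tfrac{1}{n}\sum_i\psi_i^2-\bar\psi^2$ before dividing by $\bar\psi(1-\bar\psi)$. The only difference is cosmetic (you work backward from the cleared-denominator form), so there is nothing to add.
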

\begin{proof}
\begin{align*}
L  = \frac{1}{n}\sum_{i=1}^n \left(\psi_i^2+ \left(1-\psi_i \right)^2 \right)
&= \frac{1}{n}\sum_{i=1}^n \left(\psi_i^2+1 -2\psi_i + \psi_i^2\right) 
= \frac{2}{n}\sum_{i=1}^n \psi_i^2 +1 -2\bar{\psi}.\\
\end{align*}
Thus, $\frac{L-1}{2}= \sum_{i=1}^n\psi_i^2/n-\bar{\psi}$.
The variance formula then gives
\begin{align*}
    \sigma^2 
    & = \sum_{i=1}^n \psi_i^2/n - \bar{\psi}^2 
     = \frac{L-1}{2}+ \bar{\psi} - \bar{\psi}^2
     = \bar{\psi}(1-\bar{\psi}) -\frac{\left(1-L \right)}{2}.
\end{align*}
It remains only to divide $\sigma^2$ by $\bar{\psi}(1-\bar{\psi})$ and recall \eqref{e:prob2pair}.
\end{proof}

\begin{proposition}
\label{probandwiseprop}
$R^2 \leq 1-\frac{1}{1-\bar{\psi}}(1-BC) = 1-\frac{1}{1-\bar{\psi}}\frac{1-PC}{1+PC}$
\end{proposition}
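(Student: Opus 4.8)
The plan is to chain together the two preceding lemmas, using the structural assumption $L \leq V$ as the single inequality that upgrades an identity into the desired bound. First I would invoke Lemma \ref{sigmalemma}, which gives the exact expression $R^2 = 1 - (1-L)/(2\bar{\psi}(1-\bar{\psi}))$. Since $\bar{\psi}(1-\bar{\psi}) > 0$, the right-hand side is monotone increasing in $L$, so the assumption $L \leq V$ immediately yields the bound $R^2 \leq 1 - (1-V)/(2\bar{\psi}(1-\bar{\psi}))$.

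Next I would substitute the value of $V$ supplied by Lemma \ref{probandwiselemma}. That lemma gives $V = 1 - 2\bar{\psi}(1-BC)$, hence $1 - V = 2\bar{\psi}(1-BC)$. Plugging this into the bound above, the factor $2\bar{\psi}$ cancels against the $2\bar{\psi}$ in the denominator, leaving $R^2 \leq 1 - (1-BC)/(1-\bar{\psi})$, which is exactly the first of the two claimed expressions.

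Finally, the equality between the two right-hand sides is not a separate inequality but an algebraic identity already established inside the proof of Lemma \ref{probandwiselemma}: equation \eqref{e:prob2pair} records $1 - BC = (1-PC)/(1+PC)$. Substituting this identity into $1 - (1-BC)/(1-\bar{\psi})$ produces $1 - \frac{1}{1-\bar{\psi}}\frac{1-PC}{1+PC}$, which completes the chain and matches the second form in the statement.

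I do not anticipate a genuine obstacle here, since the argument is essentially a two-step substitution surrounding a single application of the hypothesis. The only point requiring care is the \emph{direction} of the inequality: because the coefficient of $L$ in Lemma \ref{sigmalemma} is positive, the common-cause assumption $L \leq V$ pushes $R^2$ downward, which is precisely what is needed to obtain an \emph{upper} bound on $R^2$. It is worth flagging that this substitution is the one place where the modeling assumption enters (twins share common causes, with the inequality tightening when they are reared apart); every other manipulation in the proof is an exact identity inherited from the two lemmas.
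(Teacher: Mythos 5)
Your proposal is correct and follows essentially the same route as the paper's proof: apply Lemma \ref{sigmalemma}, use the monotonicity in $L$ together with the hypothesis $L\leq V$, substitute $V=1-2\bar{\psi}(1-BC)$ from Lemma \ref{probandwiselemma}, and invoke the identity \eqref{e:prob2pair} for the pairwise form. The only cosmetic difference is that you phrase the bound directly in terms of $R^2$ while the paper first bounds $\sigma^2$ and then divides by $\bar{\psi}(1-\bar{\psi})$; incidentally, your sign bookkeeping ($1-V = 2\bar{\psi}(1-BC)$) is cleaner than the paper's, which writes $(V-1)/2=\bar{\psi}(1-BC)$ with a dropped minus sign.
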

\begin{proof}
By Lemma \ref{sigmalemma} and the assumption that $L\leq V$ we have \begin{equation}\label{proofeq}\sigma^2=\bar{\psi}(1-\bar{\psi})+\frac{L-1}{2}\leq \bar{\psi}(1-\bar{\psi})+\frac{V-1}{2}.\end{equation} By Lemma \ref{probandwiselemma} we have $V=1-2\bar{\pi}(1-BC)$ implying $(V-1)/2=\bar{\psi}(1-BC)$. It remains only to divide (\ref{proofeq}) by $\bar{\psi}(1-\bar{\psi})$ and recall \eqref{e:prob2pair}.
\end{proof}

\section{Proof of Proposition \ref{P}}
\label{A1}
For each measure of association, $\xi \in \{RD,RR,OR\}$, we derive the formula in \eqref{RDformula}. 

\medskip

\noindent \underline{$\xi = RD$.}
By definition, 
$RD=\frac{p_{11}}{p_{11}+p_{10}}-\frac{p_{01}}{p_{00}+p_{01}}$ so 
\begin{align*}
p_e(1-p_e)RD
= (p_{00}+p_{01}) p_{11} - (p_{11}+p_{10})   p_{01} 
= p_{00}p_{11} - p_{10} p_{01} 
= \phi \sqrt{p_e (1-p_e) p_d (1-p_d)}
\end{align*}
So we have that $\lambda |RD|=  |\phi| $, so $T = 1 - |\phi| = 1 - |RD| \lambda$, as desired.

\bigskip

\noindent \underline{$\xi = RR$.} 
Using the definition of $RR$, we have 
\begin{align*}
RR - 1 &= \frac{p_{11}}{p_e} - \frac{1-p_e}{p_{01}} - 1 
= 
\frac{p_{11} - p_e p_d}{p_e p_{01}} \\ 
p_d(RR-1) &= \frac{p_d}{p_e}\frac{p_{11} - p_e p_d}{p_{01}}\\
1 + p_e (RR-1) & = \frac{p_d(1-p_e)}{p_{01}}.
\end{align*}
Dividing these last two quantities, we obtain
$$
\frac{p_d(RR-1)}{1 + p_e (RR-1)} = 
\frac{p_{11} - p_e p_d}{p_e(1-p_e)}. 
$$
On the other hand, 
$$RD = \frac{p_{11}}{p_e} - \frac{p_{01}}{1-p_e} = \frac{p_{11}(1-p_e) - p_{01}p_e}{p_e(1-p_e)} 
= \frac{p_{11} - p_e p_d}{p_e(1-p_e)}.
$$
Combining these last two expressions, we have that 
\begin{equation*}
RD=\frac{ p_d(RR-1)}{1+p_e(RR-1)},
\end{equation*} 
which together with the expression  $T = 1 - |RD|\lambda$ gives the desired result.

\bigskip

\noindent \underline{$\xi = OR$.} 
We compute 
\begin{align}
\label{OR/RR}
    \frac{OR}{RR} 
       & = \frac{p_{00}}{p_{10}}\frac{(p_{10}+p_{11})}{(p_{01}+p_{00})}\\
       \nonumber
       & = \frac{p_{10}p_{00}+ p_{10}p_{01} - p_{10}p_{01}+p_{00}p_{11}}{p_{10}p_{01}+p_{10}p_{00}}\\
       \nonumber
       & = 1-\frac{p_{10}p_{01}}{p_{10}p_{00}+p_{10}p_{01}} + \frac{p_{00}}{p_{10}} \left(\frac{p_{11}}{p_{00}+p_{01}} \right)\\ 
       \nonumber
       & = 1-\frac{p_{01}}{p_{00}+p_{01}} + \frac{p_{01}}{p_{00} + p_{01}}OR. 
\end{align}
Rearranging, we obtain
\begin{equation}
\label{RROR}
RR=\frac{OR}{1-\frac{p_{01}}{p_{01}+p_{00}}+\frac{p_{01}}{p_{01}+p_{00}}OR}.
\end{equation}
We proceed to express $\frac{p_{01}}{p_{01}+p_{00}}$ in terms of $RR$, $p_e$, and $p_d$. By the definition of $RR$ we have
\begin{equation}\label{4comb}RR=\frac{p_{11}}{p_{01}}\frac{(1-p_e)}{p_e}.\end{equation}
From \eqref{4comb} and $p_d=p_{01}+p_{11}$ we solve for 
\[p_{01}=\frac{p_d(1-p_e)}{1+p_e(RR-1)}\]
and then divide by $p_{01}+p_{00}=1-p_e$ to obtain
\begin{equation}\label{4ac}\frac{p_{01}}{p_{01}+p_{00}}=\frac{p_d}{1+p_e(RR-1)}.\end{equation}
By combining \eqref{RROR} and \eqref{4ac} we obtain
\begin{equation}
\label{quadrat}
p_e RR^2+(p_d(OR-1)+(1-p_e)-p_eOR)RR-(1-p_e)OR=0.
\end{equation}
Writing  $a = p_d\left(OR-1\right)+\left(1-p_e\right)-p_eOR$ and $m = (p_e-1)OR$, we apply the quadratic formula to find that 
$$
RR= u_\pm := \frac{-a \pm \sqrt{a^2-4 p_e m}}{2 p_e}.
$$
The discriminant is clearly positive since 
$a^2-4 p_e m = a^2 + 4 p_e(1-p_e)OR \geq 0$ so we have two real roots. 
To choose the sign, we insist that $RR=1$ when $OR=1$, which requires the plus sign in the quadratic formula. Now combining the result with the formula for $T(\xi)$ with $\xi = RR$ in \eqref{RDformula}, we obtain the desired formula.
\begin{flushright}
$\square$
\end{flushright}
\section{Data preparation}
\label{dataprep}
\subsection{Marijuana and hard drug use}
\label{mar}
We obtained data from The Population Assessment of Tobacco and Health (PATH) Study \cite{PATH}. From ICPSR 36498 we downloaded DS1001 Wave 1: Adult Questionnaire Data with Weights. We utilized raw (unweighted) data on six variables. The first variable was (our exposure $e$) R01\_AX0085 
(Ever used marijuana, hash, THC or grass). Respondents answered yes or no. The outcome variable (our disease $d$) was derived from three variables: R01\_AX0220\_01 (Ever used substance: Cocaine or crack), R01\_AX0220\_02 (Ever used substance: Stimulants like methamphetamine or speed), and R01\_AX0220\_03 (Ever used substance: Any other drugs like heroin, inhalants, solvents, or hallucinogens). We recorded $d=1$ if a respondent answered yes to any of the questions R01\_AX0220\_01, R01\_AX0220\_02, or R01\_AX0220\_03, and we recorded $d=0$ otherwise. The two measured covariates were the following:
R01R\_A\_AGECAT7 (Age range when interviewed (7 levels)) and R01R\_A\_SEX (Gender from the interview (male or female)). With regards to age individuals were broadly classified as being $\geq 35$ years of age or $< 35$ years of age.
\subsection{Vaccines and Covid-19}
\label{vacc}
We obtained raw data from the table published in \cite{Scobie2021}. We accessed raw frequency data from the ``June 20–July 17'' section of that table. We assumed that the vaccine was $90\%$ effective at preventing severe cases of Covid-19. In the analysis of \cite{Scobie2021} they assumed vaccine effectiveness of $80\%$, $90\%$, and $95\%$. Our analysis used their median assumption of $90\%$. We conducted inverse probability weighting of the raw frequencies in the hospitalized and vaccinated column of the table of \cite{Scobie2021}. We assumed that any vaccinated person who died from Covid-19 was also hospitalized. If amongst the hospitalized and vaccinated of a given age group the raw frequencies were $a$ deaths and $b$ survivors, we replaced $b$ with $(a+b)*10-a$. We did this for all age groups to obtain the data of \ref{tv}. The data of \ref{T3} is the resulting marginal table ignoring age.
\section{Proof of Proposition \ref{abcdc}}
\label{A3}
In (\ref{SRc}) the objective function to maximize is \[\eta=1-\frac{\sigma_\pi\sigma_r}{\sqrt{\bar{\pi}(1-\bar{\pi})\bar{r}(1-\bar{r})}}.\]
The denominator $\sqrt{\bar{\pi}(1-\bar{\pi})\bar{r}(1-\bar{r})}$ is fixed by the constraints. Since the function $x\mapsto \sqrt{x}$ on $x>0$ is monotonically increasing, it is equivalent  to minimize $f(\mu):=\sigma^2_{\pi}(\mu)\sigma^2_r(\mu)$. By the law of total variation, $$
f(\mu)=\left( \sigma^2_{\hat{\pi}(c)} +\int_C \sigma^2_{\pi|c}dm\right) \ \left( \sigma^2_{\hat{r}(c)} +\int_C \sigma^2_{r|c}dm\right), 
$$ 
which is the objective function appearing in \eqref{variables}. 

We next consider the constraints in  \eqref{variables}. Note that for any distribution component $\mu_c \in \mathcal{P}(S)$, we have
$$
\sigma^2_{\pi|c} 
= \int \pi^2d\mu_c - \left(\int \pi d\mu_c\right)^2 
\leq \int \pi d\mu_c - \left(\int \pi d\mu_c\right)^2 
= u_{\pi(c)}^2
$$
and similarly $\sigma^2_{r|c} \leq u^2_{r(c)}$, 
which are upper bound constraints in \eqref{variables}. 
We also observe that for any distribution component $\mu_c \in \mathcal{P}(S)$,  we can write 
$$
\mu_c = (1-p_{e|c} )\mu_{c,0} +p_{e|c} \mu_{c,1},
$$
where $\mu_{c,0}$ is a distribution of points with $e=0$, and $\mu_{c,1}$ is a distribution of points with $e=1$. Using the law of total variation, we then have  \begin{align*}
\sigma^2_{\pi|c} 
&= (1-p_{e|c})\left(\int \pi d\mu_{c,0}-\int \pi d\mu_c \right)^2 \\
&+p_{e|c} \left(\int \pi d\mu_{c,1}-\int \pi d\mu_c \right)^2\\
&+\sigma^2_p(\mu_{c,0})+\sigma^2_p(\mu_{c,1}) \\ 
& \geq (1-p_{e|c})\left(\int \pi d\mu_{c,0}-\int \pi d\mu_c \right)^2 \\
&+ p_{e|c} \left(\int \pi d\mu_{c,1}-\int \pi d\mu_c \right)^2 \\
& = l^2_{p(c)}.
\end{align*}
Similarly, $\sigma^2_{r|c}  \geq l^2_{r(c)}$.
This shows that the lower bounds in  \eqref{variables} hold. 
By \citet[Theorem 1]{Knaeble2023}, we have  
$\frac{\sigma^2_{\pi|c}\sigma^2_{r|c}}{\sigma^2_{e|c}\sigma^2_{d|c}}\geq \phi^2(c)$, $\forall c \in C$. 
Since $f(\mu)$ is non-increasing in $\sigma^2_{\pi|c}$ and $\sigma^2_{r|c}$, we can consider only compound distributions $\mu=\int_{C} \mu_c dm$ satisfying 
$$
 \frac{\sigma^2_{\pi|c}\sigma^2_{r|c}}{\sigma^2_{e|c}\sigma^2_{d|c}} = \phi^2(c), 
  \qquad \qquad  \forall c \in C, 
$$
as in \eqref{variablesB}. 

This motivates the form of the optimization problem in \eqref{variables}. Provided $C$ is bounded, a solution to \eqref{variables}  exists because the objective function is continuous and the constraint set is compact with respect to the weak-$\star$ topology. Finally, we observe that for any collection
$\{\sigma^2_{\pi|c}, \sigma^2_{r|c}\}_{c \in C}$ 
that satisfies the constraints in \eqref{variables}, there exists a compound distribution $\mu=\int_{C} \mu_c dm$ satisfying the constraints of (\ref{SRc}) with variances $\sigma^2_{\pi|c}$ and $\sigma^2_{r|c}$. Thus, the solution to \eqref{variables} yields the optimal value in (\ref{SRc}). 
\section{Conservative confidence sets}
\label{appmatrix}
This section of the appendix is concerned with the $(1-\alpha)\times 100\%$ confidence set $\mathcal P_{n,1-\alpha}(S)$ defined in (\ref{finiteform}). That definition is based on a central limit theorem approximation provided $n\gg 3$; see \cite[p. 176]{JW}. The confidence set is conservative because it was constructed with a conservative, multinomial approximation of the Generalized (Poisson) Binomial distribution. The Generalized (Poisson) Binomial distribution (see \cite{Beaulieu1991}) is the true distribution of the frequency counts given fixed $\mu\in \mathcal P_n(S)$. The following proposition shows how the multinomial approximation is conservative. Define the discrete simplex 
\begin{equation} 
\label{e:DeltaN}
\Delta_n=\{(x_{01},x_{11},x_{00},x_{10})\in \mathbb{N}_0^4 \colon  x_{01}+x_{11}+x_{00}+x_{10}=n\}.
\end{equation}
Given fixed $\mu\in \mathcal P_n(S)$ let $X(\mu)$ be the random variable of frequency counts on $\Delta_n$ distributed according to the Generalized (Poisson) Binomial distribution, and let $X(\bar{p})$ be the random variable of frequency counts on $\Delta_n$ distributed according to the multinomial distribution of $n$ trials with probabilities determined from the same $\mu$ by $\bar{p}=f(\mu)/n$, where $f$ is defined in (\ref{vfunctional}). Note that $E(X(\mu))=E(X(\bar{p}))$.
\begin{proposition}
\label{conservativecs}
Let $\Sigma_{X(\mu)}$ denote the covariance matrix of $X(\mu)$, and let $\Sigma_{X(\bar{p})}$ denote the covariance matrix of $X(\bar{p})$. We have \[\Sigma_{X(\bar{p})} \succeq \Sigma_{X(\mu)} ,\]
where the Loewner semidefinite ordering $A \succeq B $ means that $A-B$ is a positive semi-definite matrix. 
\end{proposition}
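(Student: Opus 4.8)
The plan is to represent $X(\mu)$ as a sum of $n$ independent single-trial categorical vectors, write both covariance matrices explicitly in the same basis, and show that their difference is a sum of rank-one positive semidefinite matrices. First I would introduce, for each individual $i$, the cell-probability vector
\[
q_i = \bigl((1-\pi_i)r_i,\ \pi_i r_i,\ (1-\pi_i)(1-r_i),\ \pi_i(1-r_i)\bigr)\in\Delta,
\]
so that $\bar{p}=f(\mu)/n=\tfrac{1}{n}\sum_{i=1}^n q_i$. The generalized (Poisson) binomial count vector is then $X(\mu)=\sum_{i=1}^n Y_i$, where the $Y_i$ are independent and each $Y_i$ is a single categorical draw over the four cells with probabilities $q_i$. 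The single-trial covariance of $Y_i$ is the standard form $\operatorname{diag}(q_i)-q_iq_i^t$, so by independence $\Sigma_{X(\mu)}=\sum_{i=1}^n\bigl(\operatorname{diag}(q_i)-q_iq_i^t\bigr)$.

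Next I would record the multinomial covariance in the same basis, $\Sigma_{X(\bar{p})}=n\bigl(\operatorname{diag}(\bar{p})-\bar{p}\,\bar{p}^t\bigr)$. Because $\sum_i\operatorname{diag}(q_i)=\operatorname{diag}\bigl(\sum_i q_i\bigr)=n\operatorname{diag}(\bar{p})$, the diagonal parts cancel upon subtraction, and I am left with
\[
\Sigma_{X(\bar{p})}-\Sigma_{X(\mu)}=\sum_{i=1}^n q_iq_i^t-n\,\bar{p}\,\bar{p}^t=\sum_{i=1}^n (q_i-\bar{p})(q_i-\bar{p})^t,
\]
the last equality being the usual rearrangement of a scaled empirical covariance. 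Each summand is an outer product $vv^t$ with $v=q_i-\bar{p}$, hence positive semidefinite, and a finite sum of positive semidefinite matrices is positive semidefinite. This gives $\Sigma_{X(\bar{p})}-\Sigma_{X(\mu)}\succeq 0$, which is exactly $\Sigma_{X(\bar{p})}\succeq\Sigma_{X(\mu)}$.

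I do not expect a serious obstacle: the statement is the matrix analogue of the elementary fact that replacing the heterogeneous success probabilities of independent trials by their common average inflates the variance. The one point deserving care is the bookkeeping that forces the diagonal terms to cancel, which is precisely where the equality of expectations $E(X(\mu))=E(X(\bar{p}))$ enters; once that cancellation is in place, the claim reduces to the observation that an empirical second-moment matrix minus the outer product of its mean is the positive semidefinite empirical covariance. A secondary item worth verifying explicitly is the single-trial covariance formula $\operatorname{diag}(q_i)-q_iq_i^t$ and its additivity over the independent trials $Y_i$, but both are routine.
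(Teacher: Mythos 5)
Your proof is correct, but it takes a genuinely different route from the paper's. The paper writes out all sixteen entries of $\Sigma_{X(\mu)}$ and $\Sigma_{X(\bar{p})}$ explicitly as sums over individuals, and then argues that the difference is a symmetric matrix that is weakly diagonally dominant with each row summing to zero, using Jensen-type inequalities to control the signs of the off-diagonal terms before verifying the row-sum identity. You instead decompose $X(\mu)=\sum_{i=1}^n Y_i$ into independent single-trial categorical vectors with cell probabilities $q_i$, so that $\Sigma_{X(\mu)}=\sum_i\bigl(\operatorname{diag}(q_i)-q_iq_i^t\bigr)$ and $\Sigma_{X(\bar{p})}=n\bigl(\operatorname{diag}(\bar{p})-\bar{p}\bar{p}^t\bigr)$; the diagonal parts cancel because $\sum_i q_i=n\bar{p}$ (equivalently, $E(X(\mu))=E(X(\bar{p}))$), leaving $\Sigma_{X(\bar{p})}-\Sigma_{X(\mu)}=\sum_i(q_i-\bar{p})(q_i-\bar{p})^t$, a sum of rank-one positive semidefinite matrices. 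Your argument is shorter, works verbatim for any number of categories rather than just the four cells of the $2\times 2$ table, makes the equality case transparent (all $q_i$ equal), and avoids the delicate sign bookkeeping that the paper's diagonal-dominance step requires for the off-diagonal entries; what the paper's entrywise computation buys in exchange is the explicit list of covariance formulas in terms of $(\pi_i,r_i)$, which it displays as part of the exposition, and a verification that stays at the level of elementary scalar identities. Note also that your Gram-sum identity explains why the paper's final row-sum computation comes out exactly zero: each vector $q_i-\bar{p}$ has entries summing to zero, so every row of the difference matrix sums to zero.
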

We write $$X(\mu)=(X_{01}(\mu), X_{11}(\mu), X_{00}(\mu), X_{10}(\mu))$$ and $$X(\bar{p})=(X_{01}(\bar{p}), X_{11}(\bar{p}), X_{00}(\bar{p}), X_{10}(\bar{p})).$$ We compute the symmetric covariance matrix
\begin{equation*}
\Sigma_{X(\mu)}= 
\begin{bmatrix}
\sigma^2_{X_{01}(\mu)} & \sigma_{X_{01}(\mu)X_{11}(\mu)} & \sigma_{X_{01}(\mu)X_{00}(\mu)} & \sigma_{X_{01}(\mu)X_{10}(\mu)}\\
\cdot & \sigma^2_{X_{11}(\mu)} & \sigma_{X_{11}(\mu)X_{00}(\mu)} & \sigma_{X_{11}(\mu)X_{10}(\mu)}\\
\cdot & \cdot & \sigma^2_{X_{00}(\mu)} & \sigma_{X_{00}(\mu)X_{10}(\mu)}\\
\cdot & \cdot & \cdot & \sigma^2_{X_{11}(\mu)}
\end{bmatrix},
\end{equation*}
where
\begin{align*}
    &\sigma^2_{X_{01}(\mu)}=\sum_{i=1}^n (1-\pi_i)r_i(1-(1-\pi_i)r_i), \quad \sigma^2_{X_{11}(\mu)}=\sum_{i=1}^n \pi_i r_i(1-\pi_i r_i), \\
    &\sigma^2_{X_{00}(\mu)}=\sum_{i=1}^n (1-\pi_i )(1-r_i)(1-(1-\pi_i)(1-r_i)), \quad \sigma^2_{X_{10}(\mu)}=\sum_{i=1}^n \pi_i(1-r_i)(1-\pi_i(1-r_i)),\\
    &\sigma_{X_{01}(\mu)X_{11}(\mu)}=-\sum_{i=1}^n \pi_i(1-\pi_i)r^2_i, \quad \sigma_{X_{01}(\mu)X_{00}(\mu)}=-\sum_{i=1}^n (1-\pi_i)^2r_i(1-r_i), \\
    &\sigma_{X_{11}(\mu)X_{10}(\mu)}=-\sum_{i=1}^n p^2_ir_i(1-r_i), \textrm{~and~} \\
&\sigma_{X_{01}(\mu)X_{10}(\mu)}=\sigma_{X_{11}(\mu)X_{00}(\mu)}=-\sum_{i=1}^n \pi_i(1-\pi_i)r_i(1-r_i).
\end{align*}
We then compute the other symmetric covariance matrix
\begin{equation*}
\Sigma_{X(\bar{p})}= 
\begin{bmatrix}
\sigma^2_{X_{01}(\bar{p})} & \sigma_{X_{01}(\bar{p})X_{11}(\bar{p})} & \sigma_{X_{01}(\bar{p})X_{00}(\bar{p})} & \sigma_{X_{01}(\bar{p})X_{10}(\bar{p})}\\
\cdot & \sigma^2_{X_{11}(\bar{p})} & \sigma_{X_{11}(\bar{p})X_{00}(\bar{p})} & \sigma_{X_{11}(\bar{p})X_{10}(\bar{p})}\\
\cdot & \cdot & \sigma^2_{X_{00}(\bar{p})} & \sigma_{X_{00}(\bar{p})X_{10}(\bar{p})}\\
\cdot & \cdot & \cdot & \sigma^2_{X_{11}(\bar{p})}
\end{bmatrix},
\end{equation*}
where
\begin{align*}
    &\sigma^2_{X_{01}(\bar{p})}=\frac{1}{n}\sum_{i=1}^n (1-\pi_i)r_i\sum_{i=1}^n(1-(1-\pi_i)r_i), \quad \sigma^2_{X_{11}(\bar{p})}=\frac{1}{n}\sum_{i=1}^n \pi_i r_i\sum_{i=1}^n(1-\pi_ir_i), \\
    &\sigma^2_{X_{00}(\bar{p})}=\frac{1}{n}\sum_{i=1}^n (1-\pi_i)(1-r_i)\sum_{i=1}^n(1-(1-\pi_i)(1-r_i)),\\
& \frac{1}{n}\sigma^2_{X_{10}(\bar{p})}=\sum_{i=1}^n \pi_i(1-r_i)\sum_{i=1}^n(1-\pi_i(1-r_i)),\\
    &\sigma_{X_{01}(\bar{p})X_{11}(\bar{p})}=\frac{-1}{n}\sum_{i=1}^n (1-\pi_i)r_i\sum_{i=1}^n\pi_ir_i, \\
 &\sigma_{X_{01}(\bar{p})X_{00}(\bar{p})}=\frac{-1}{n}\sum_{i=1}^n (1-\pi_i)r_i\sum_{i=1}^n(1-\pi_i)(1-r_i), \\
    &\sigma_{X_{11}(\bar{p})X_{10}(\bar{p})}=\frac{-1}{n}\sum_{i=1}^n \pi_ir_i\sum_{i=1}^n\pi_i(1-r_i), \textrm{~and~}\\ &\sigma_{X_{01}(\bar{p})X_{10}(\bar{p})}=\sigma_{X_{11}(\bar{p})X_{00}(\bar{p})}=\frac{-1}{n}\sum_{i=1}^n \pi_i(1-r_i)\sum_{i=1}^nr_i(1-\pi_i).
\end{align*}

We will show that $\Sigma_{X(\mu)} - \Sigma_{X(\bar{p})}$ is diagonally dominant. By symmetry of the unit square it suffices to consider a single row.  We describe the calculations for the second row. In what follows the expectations are over the population. It suffices to show
 \begin{align*}
 \left| E_i\pi_ir_i(1-\pi_ir_i) - E_i\pi_ir_iE_i(1-\pi_ir_i) \right| 
 \ge \left|E_i(1- \pi_i)r_i\pi_ir_i - E_i(1-\pi_i)r_iE_i\pi_ir_i\right|  \\
     +  \left| E_i\pi_ir_i(1-\pi_i)(1-r_i) - E_i\pi_ir_iE_i(1-\pi_i)(1-r_i) \right| \\
   + \left| E_i\pi_i r_i \pi_i(1-r_i)- E_i\pi_ir_i E_i \pi_i(1-r_i)\right|.
  \end{align*}
 Repeated applications of Jensen's inequality allows us to drop the absolute value bars. We then need to show
   \begin{align*}
  E_i\pi_ir_i(1-\pi_ir_i) - E_i\pi_ir_iE_i(1-\pi_ir_i)  
 - E_i(1- \pi_i)r_i\pi_ir_i + E_i(1-\pi_i)r_iE_i\pi_ir_i  \\
     -   E_i\pi_ir_i(1-\pi_i)(1-r_i) + E_i\pi_ir_iE_i(1-\pi_i)(1-r_i)  \\
   -  E_i\pi_i r_i \pi_i(1-r_i) +  E_i \pi_i r_i E_i \pi_i(1-r_i) \ge 0.
  \end{align*}
Rearranging, we have
     \begin{align*}
  \left(E_i\pi_ir_i(1-\pi_ir_i) - E_i(1- \pi_i)r_i\pi_ir_i  - E_i\pi_ir_i(1-\pi_i)(1-r_i)  -  E_i \pi_i r_i \pi_i(1-r_i)\right) -\\ \left(E_i\pi_ir_iE_i(1-\pi_ir_i)  
 - E_i(1-\pi_i)r_iE_i\pi_ir_i  - E_i\pi_ir_iE_i(1-\pi_i)(1-r_i)  - E_i \pi_ir_i E_i \pi_i(1-r_i) \right)=0,
  \end{align*}
which is greater than or equal to zero as desired.
\begin{flushright}
$\square$
\end{flushright}
\clearpage
\bibliographystyle{abbrvnat}
\bibliography{refs.bib}
\end{document}